\definecolor{darkblue}{RGB}{0,0,196}
\definecolor{darkgreen}{RGB}{0,120,0}
\newcommand{\bea}{\begin{eqnarray}}
\newcommand{\eea}{\end{eqnarray}}
\newcommand{\bel}[1]{\begin{eqnarray}\label{#1}}
\newcommand{\eel}{\end{eqnarray}}
\newcommand{\CITn}[1]{\citep{#1}} 
\newcommand{\p}{\partial}
\newcommand{\dd}{\mathrm{d}}
\newcommand{\bv}{{\boldsymbol b}} 
\newcommand{\ev}{{\boldsymbol e}}
\newcommand{\pv}{{\boldsymbol p}}
\newcommand{\xv}{{\boldsymbol x}}
\newcommand{\f}[2]{\frac{#1}{#2}}
\newcommand{\onehalf}{{\nicefrac{1}{2}}} 
\newcommand{\threefourths}{{\nicefrac{3}{4}}} 
\def\spin{\,\textgoth{s:}}
\def\al{\alpha}
\def\be{\beta}
\def\ga{\gamma}
\def\de{\delta}
\def\ep{\epsilon}
\def\la{\lambda}
\def\rh{\rho}
\def\ta{\tau}
\def\si{\sigma}
\def\om{\omega}
\begin{document}

\begin{frontmatter}
\title{Perfect spin hydrodynamics at all orders in spin polarization}
\author[first]{Zbigniew Drogosz}
\ead{zbigniew.drogosz@alumni.uj.edu.pl}
\affiliation[first]{organization={Institute of Theoretical Physics, Jagiellonian University},
            city={Kraków},
            postcode={30-348}, 
            country={Poland}}
\date{\today}
\begin{abstract}
We compare two recently developed frameworks of perfect spin hydrodynamics for spin-$1/2$ particles, based respectively on classical kinetic theory and the Wigner function. We show that the conserved currents in both approaches have the same form at each order of the expansion in the components of the spin polarization tensor $\omega$. The only difference is a relative multiplicative factor, which is equal to 1 at the lowest nontrivial order and increases monotonically with the expansion order.
\end{abstract}

\begin{keyword}

relativistic hydrodynamics \sep spin dynamics 
\sep energy-momentum tensor \sep spin tensor

\end{keyword}

\end{frontmatter}

\section{Introduction}

Since the birth of quantum mechanics, the relationship between classical and quantum physics has been widely discussed. The correspondence principle states that the classical picture is recovered by quantum physics in the limit of large quantum numbers, i.e., when the relevant actions are large compared to Planck's constant. Thus, for example, for large values of the angular momentum, the impact of its being quantized becomes negligible. This naturally raises the issue of the validity of a classical description of spin, whose values of the order of a few $\hbar$ can hardly be characterized as large quantum numbers. Nonetheless, in the field of relativistic hydrodynamics an interest in the classical spin exists, since polarization phenomena in heavy-ion collisions, in particular the polarization of $\Lambda$ hyperons~\cite{STAR:2017ckg, STAR:2018gyt, STAR:2019erd} and vector mesons~\cite{ALICE:2019aid}, have been described within the theoretical framework of spin hydrodynamics using kinetic theory and a classical spin picture introduced by Mathisson~\cite{Mathisson:1937zz,2010GReGr..42.1011M,Florkowski:2024bfw,Drogosz:2024gzv}.
That approach is far from being the only one possible, and progress towards the understanding of spin hydrodynamics has been made through several pathways,
including works where the final particle polarization is determined using gradients of hydrodynamic fields on the freezeout hypersurface~\cite{Becattini:2009wh,Becattini:2021iol,Palermo:2024tza},
kinetic theory studies~\cite{Florkowski:2017ruc,Shi:2020htn,Hu:2021pwh,  Bhadury:2020puc, Bhadury:2022ulr, Weickgenannt:2019dks, Weickgenannt:2021cuo, Weickgenannt:2020aaf, Weickgenannt:2022zxs, Weickgenannt:2023nge, Wagner:2024fhf, Banerjee:2024xnd, Bhadury:2024ckc}, 
derivations based on the entropy principle~\cite{Li:2020eon,
Hattori:2019lfp, Fukushima:2020ucl,
Biswas:2023qsw, Xie:2023gbo, Daher:2024ixz, Ren:2024pur, Daher:2024bah, 
Fang:2025aig}, Lagrangian effective theories~\cite{Montenegro:2017rbu,Montenegro:2020paq},
and a divergence-type theory framework~\cite{Abboud:2025shb}.
See also the reviews~\cite{Florkowski:2018fap,Becattini:2020ngo,Huang:2024ffg,Florkowski:2024cif}.

Recently, there appeared a new quantum spin hydrodynamics description~\cite{Bhadury:2025boe, Kar:2025qvj}, based on a novel Wigner function, aiming to supersede the previously used one~\cite{Becattini:2013fla}. It has several advantages: It leads to a well-defined polarization magnitude, thereby removing normalization issues; its application range covers parameter range that occurs in heavy-ion collisions~\cite{Drogosz:2025ihp}; it results in the same thermodynamic relations as those recently derived in a different way~\cite{Florkowski:2024bfw,Drogosz:2024gzv}; its equations have the desirable causality and stability properties~\cite{Bhadury:2025wuh}.
Somewhat surprisingly, the classical and the quantum perfect spin hydrodynamics pictures agree when spin polarizations are low~\cite{Bhadury:2025boe}, contrary to the usual requirement of large quantum numbers. This prompts the question whether and when this correspondence breaks down.

This study compares the quantum and the classical description of spin-polarized hydrodynamic systems and shows that the conserved current tensors in the two approaches have fully analogous structures at any order of expansion in the dimensionless spin polarization tensor $\omega_{\alpha \beta}$, defined as the ratio of the spin chemical potential $\Omega_{\alpha \beta}$ to the temperature $T$,
$\omega_{\alpha \beta} \equiv \Omega_{\alpha \beta}/T$. 
The expressions differ only by a multiplicative factor, which increases monotonically with the expansion order. This separation at higher orders can be precisely quantified. With the classical spin normalization constant $\spin$ equal to the eigenvalue of the SU(2) Casimir operator~\cite{Florkowski:2018fap}, $\spin^2= \threefourths$, chosen such that the baryon current~$N^\mu$ and the energy-momentum tensor $T^{\mu \nu}$ agree with the quantum description exactly at the second order in $\omega$, and the spin tensor $S^{\la, \mu \nu}$ agrees at the first order in $\omega$, the $2n$-th-order contribution to $N^\mu$ and $T^{\mu \nu}$ ($2n-1$-th-order contribution to $S^{\la, \mu \nu}$) is $3^n/(2n+1)$ times higher in the classical spin picture.

Notations and conventions: We use natural units, $\hbar = c = k_{\rm B} = 1$, the mostly negative metric convention $g^{\mu \nu} = \rm{diag}(1,-1,-1,-1)$, and the Levi-Civita symbol $\epsilon^{0123} = - \epsilon_{0123} = 1$. A colon between rank-two tensors denotes contraction over both indices, $\omega : s \equiv \omega_{\alpha \beta} s^{\alpha \beta}$. Round brackets around tensor indices denote normalized symmetrization, i.e., a sum over all permutations of the indices divided by the number of permutations, $T^{(\alpha_1 \alpha_2 \dots \alpha_n)} = \f{1}{n!}\sum\limits_{\si} T^{\si(\al_1) \si(\al_2) \dots \si(\al_n)}$. 

\section{Classical spin description}

The classical-spin approach to spin hydrodynamics \cite{Drogosz:2024gzv} uses particle (plus sign) and antiparticle (minus sign) equilibrium distribution functions $f_{\rm eq}^\pm(x,p,s)$ in the 
phase space that is extended to include the spin four-vector $s^\mu$ in addition to the 
spacetime coordinates $x^\mu = (t, \xv)$ and the four-momentum $p^\mu = (E_p, \pv)$. The mass-shell assumption gives $E_p =\sqrt{m^2 + \pv^2}$,
and the integration measure in the momentum space is
\begin{equation}
\dd P = \f{\dd^3 p}{(2\pi)^3 E_p}.
\end{equation}
The internal angular momentum of a particle is defined in a way proposed by Mathisson~\CITn{Mathisson:1937zz,2010GReGr..42.1011M}, $ s^{\alpha \beta} = (1/m) \epsilon^{\alpha\beta\gamma\delta} p_\gamma s_\delta$. The spin four-vector $s^\mu = (1/(2m))\epsilon^{\al \be \ga \de} p_\be s_{\ga \de}$ satisfies the orthogonality and normalization conditions, $p \cdot s = 0$, $s^2 = -\spin^2$, which are enforced by delta functions in the Lorentz-invariant spin-space integration measure 
\begin{equation}
\dd S = \f{m}{\pi \spin} \dd^4 s \delta (s \cdot s + \spin^2) \delta (p \cdot s).
\end{equation}
In the main text, we assume a dilute system, in which it is justified to approximate the Fermi--Dirac particle ($+$) and antiparticle ($-$) distribution functions with the Boltzmann distribution
\begin{equation}
f^\pm_{\rm0, eq}(x,p,s) = \exp \bigg(\!\pm \xi-p \cdot \beta(x) + \frac{1}{2} \omega(x) : s \bigg),
\end{equation}
where $\xi \equiv \mu/T$ and $\beta^\mu \equiv u^\mu/T$, with $\mu$ being the chemical potential and $u^\mu$ the flow four-vector (see \ref{sec:fd} for an extension to the Fermi--Dirac case).
In the GLW pseudogauge \cite{DeGroot:1980dk}, chosen so that the spin tensor is nonzero and conserved in a perfect spin fluid, the scalar density
\begin{align}\label{ncl}
n_{\rm cl} =2 \int \dd P \int \dd S  \cosh \xi\exp\left(- p \cdot \beta + \frac{1}{2} \omega_{\alpha \beta} s^{\alpha \beta} \right) 
= 2\int \dd P   \cosh \xi\exp\left(- p \cdot \beta \right)
\int \dd S \exp\left(\frac{1}{2} \omega : s\right)
\end{align}
is the central object of the theory and a generating function for the equilibrium baryon current $N^\mu_{\rm eq}$, the energy-momentum tensor $T^{\mu \nu}_{\rm eq}$, and the spin tensor $S_{\rm eq} ^{\lambda, \mu\nu}$, which can be obtained from it by taking derivatives with respect to $\xi$, $\beta^\mu$ and $\omega_{\mu\nu}$,
\begin{align}\label{clN}
N^\mu_{\rm eq} &= -\f{\p^2 n_{\rm cl}}{\p \be_\mu \p \xi} =\int \dd P \,\dd S \, p^\mu \, \left[f_{\rm 0,eq}^+(x,p,s)-f_{\rm 0,eq}^-(x,p,s) \right],\\ \label{clT}
T^{\mu \nu}_{\rm eq} &= \f{\p^2 n_{\rm cl}}{\p \be_\mu \p\be_\nu} = \int \dd P \,\dd S \, p^\mu p^\nu \, \left[f_{\rm 0,eq}^+(x,p,s) + f_{\rm 0,eq}^-(x,p,s) \right],\\ \label{clS}
S_{\rm eq}^{\lambda, \mu\nu} &= -\f{\p^2 n_{\rm cl}}{\p \be_\lambda \p\omega_{\mu \nu}}  =\int \!\dd P \, \dd S \, \, p^\lambda \, s^{\mu \nu} 
\left[f_{\rm 0,eq}^+(x,p,s)+ f_{\rm 0,eq}^-(x,p,s) \right].
\end{align}

\section{Quantum spin description}

Assuming the Boltzmann distribution, the GLW versions of the conserved currents can be obtained by differentiating the quantum scalar density~\cite{Bhadury:2025boe,Drogosz:2025ihp}
\begin{equation}\label{nqt}
n_{\rm qt}(x) = 4  \int \! \dd P  \exp\left(- p \cdot \beta \right) \cosh \xi \,\cosh\!\sqrt{-a^2},
\end{equation}
namely
\begin{align}\label{qtN}
N^\mu(x) &= -\f{\p^2 n_{\rm qt}}{\p \be_\mu \p \xi} = 2  \int \dd P  \, p^\mu \left[  f_0^+(x,p)   - f_0^- (x,p)   \right]  \cosh\!\sqrt{-a^2},\\ \label{qtT}
T^{\mu\nu}(x) &= \f{\p^2 n_{\rm qt}}{\p \be_\mu \p\be_\nu}= 2  \int \dd P  \, p^\mu p^\nu \left[  f_0^+(x,p)   + f_0^- (x,p)   \right]  \cosh\!\sqrt{-a^2},\\ \label{qtS}
S^{\lambda, \mu\nu}(x) &= -\f{\p^2 n_{\rm qt}}{\p \be_\lambda \p \omega_{\mu \nu}} = \frac{1}{m} \int \dd P \, p^\lambda\, 
\left[ f_0^+(x,p) + f_0^-(x,p)  \right] \frac{\sinh\!\sqrt{-a^2}}{\sqrt{-a^2}}  {\epsilon^{\mu \nu}}_{\rho \sigma} a^\rho p^\sigma,
\end{align}
where
\begin{eqnarray}\label{amu}
a_\mu(x,p) \equiv -\frac{1}{2 m} {\tilde \omega}_{\mu\nu}(x)p^\nu = - \f{1}{4m} \epsilon_{\mu\nu\alpha \beta} \omega^{\alpha \beta} p^\nu.
\end{eqnarray}%
The last expression uses the definition of the dual spin polarization tensor ${\tilde \omega}_{\mu\nu} \equiv (1/2) \, \epsilon_{\mu\nu\alpha \beta} \omega^{\alpha \beta}$. If the spin polarization tensor $\omega_{\mu\nu}$ is expressed in terms of the electriclike and magneticlike three-vectors, \mbox{$\ev = (e^1,e^2,e^3)$} and $\bv = (b^1,b^2,b^3)$,
\begin{equation}\label{omegamatrix}
\omega_{\mu\nu} = 
\begin{bmatrix}
0     &  e^1 & e^2 & e^3 \\
-e^1  &  0    & -b^3 & b^2 \\
-e^2  &  b^3 & 0 & -b^1 \\
-e^3  & -b^2 & b^1 & 0
\end{bmatrix},
\end{equation}
then the form of the dual tensor ${\tilde \omega}_{\mu\nu}$ is obtained through the substitution $\ev \rightarrow \bv$ and $\bv \rightarrow -\ev$.

\section{Comparison of the quantum and classical generating functions}

Let us compare the generating functions (\ref{ncl}) and (\ref{nqt}). The quantum generating function can be expanded in the powers of $a^\mu$, which, as follows from the definition (\ref{amu}), is in fact an expansion in the powers of the components of $\om_{\mu \nu}$,
\begin{equation}\label{expqt}
n_{\rm qt}(x) = 4  \int \! \dd P  \exp\left(- p \cdot \beta \right) \cosh \xi \,\cosh\!\sqrt{-a^2} = 4  \int \! \dd P  \exp\left(- p \cdot \beta \right) \cosh \xi \sum_{n=0}^\infty (-1)^n \f{(-a^{2})^n }{(2n)!}.
\end{equation}
The exponential function of $\om : s$ that appears in the classical generating function can be expanded in a power series as well,
\begin{equation}\label{expcl}
n_{\rm cl} = 2 \int \dd P   \exp\left(- p \cdot \beta \right) \cosh \xi
\int \dd S \sum_{n=0}^\infty \f{(\omega : s)^n}{n!2^n}.
\end{equation}
Thus, the comparison requires the calculation of integrals over the spin space, $\int \dd S \ (\omega : s)^n$, $ n\geq 0$.
Quantities other than the spin four-vectors can be factored out of the integral,
\begin{align}\begin{split}\label{oms2n}
\int \dd S \, (\omega : s)^{n} &= \f{1}{m^n}\ep_{\al_1 \be_1 \mu_1 \nu_1}\, \ep_{\al_2 \be_2 \mu_2 \nu_2} \cdots \ep_{\al_{n} \be_{n} \mu_{n} \nu_{n}} \ \om^{\mu_1 \nu_1} \om^{\mu_2 \nu_2} \cdots \om^{\mu_{n} \nu_{n}} \ p^{\be_1} p^{\be_2} \cdots p^{\be_{n}}
\int \dd S s^{\al_1} s^{\al_2} \cdots s^{\al_{n}}.
\end{split}\end{align}
The integral of a product of spin four-vectors has a closed form in terms of the metrics and the $p$ four-vector (see the proof in the Appendix). It vanishes for products of an odd number of spin four-vectors (\ref{eq:odd}), 
whereas for even-length products it is proportional to the symmetrized product of projectors (\ref{eq:even}),
\begin{equation}\label{ints}
\int \dd S s^{\alpha_1} s^{\alpha_2} \cdots s^{\alpha_{2n}} =  \f{2\spin^{2n}}{2n+1} \Delta^{(\alpha_1 \alpha_2} \Delta^{\alpha_3 \alpha_4} \cdots \Delta^{\alpha_{2n-1} \alpha_{2n})}, \quad \Delta^{\mu \nu} \equiv g^{\mu \nu} - \f{p^\mu p^\nu}{m^2}.
\end{equation}

After the substitution of (\ref{ints}) into (\ref{oms2n}) and expansion of all projectors $\Delta$ using their definition, only the terms where every $\Delta$ contributed a metric tensor factor survive: The expression (\ref{oms2n}) contains at least $n$ factors of $p$, so multiplication by any additional such factor causes a Levi-Civita symbol to be contracted with two four-vectors $p$ and therefore vanish as a contraction of an antisymmetric tensor with a symmetric one. Hence, 
\begin{align}\begin{split}
\int \dd S \, (\omega : s)^{2n} &= \f{1}{m^{2n}}\ep_{\al_1 \be_1 \mu_1 \nu_1}\, \ep_{\al_2 \be_2 \mu_2 \nu_2} \cdots \ep_{\al_{2n} \be_{2n} \mu_{2n} \nu_{2n}} \ \om^{\mu_1 \nu_1} \om^{\mu_2 \nu_2} \cdots \om^{\mu_{2n} \nu_{2n}} \ p^{\be_1} p^{\be_2} \cdots p^{\be_{2n}}
\f{2\spin^{2n}}{2n+1} g^{(\al_1 \al_2} g^{\al_3 \al_4} \cdots g^{\al_{2n-1} \al_{2n})}.
\end{split}\end{align}
Because of the summation over every index, the contribution of each of the terms of the symmetrized product is equal. Therefore, the normalized symmetrized product can be replaced by the ordinary product $g^{\alpha_1 \alpha_2} g^{\alpha_3 \alpha_4} \cdots g^{\alpha_{2n-1} \alpha_{2n}}$,
\begin{align}\begin{split}\label{oms}
\int \dd S \, (\omega : s)^{2n} &= \f{1}{m^{2n}}\ep_{\al_1 \be_1 \mu_1 \nu_1}\, \ep_{\al_2 \be_2 \mu_2 \nu_2} \cdots \ep_{\al_{2n} \be_{2n} \mu_{2n} \nu_{2n}} \ \om^{\mu_1 \nu_1} \om^{\mu_2 \nu_2} \cdots \om^{\mu_{2n} \nu_{2n}} \ p^{\be_1} p^{\be_2} \cdots p^{\be_{2n}}
\f{2\spin^{2n}}{2n+1} g^{\alpha_1 \alpha_2} g^{\alpha_3 \alpha_4} \cdots g^{\alpha_{2n-1} \alpha_{2n}}\\
&= \f{2 \spin^{2n}}{2n+1} \left(\f{1}{m^2}\ep_{\al_1 \be_1 \mu_1 \nu_1}\, \ep^{\al_1}{}_{\be_2 \mu_2 \nu_2}\om^{\mu_1 \nu_1} \om^{\mu_2 \nu_2}p^{\be_1} p^{\be_2}\right)^n =(-1)^n \frac{2\spin^{2n} (-a^2)^n 2^{4n}}{2n+1},
\end{split}\end{align}
and thus
\begin{equation}\label{expcls}
n_{\rm cl} = 4 \int \dd P \exp(-p \cdot \beta) \cosh \xi \sum_{n=0}^\infty (-1)^n \f{ \spin^{2n} (-a^2)^n 2^{2n}}{(2n+1)!}.
\end{equation}

This result shows that the corresponding terms of the expansions of the quantum scalar density (\ref{expqt}) and the classical scalar density (\ref{expcls}) have the same functional form. However, at each order of the expansion they differ by a relative multiplicative factor. To achieve an exact equivalence, the classical spin normalization (which is the only parameter available) would have to absorb that factor and therefore no longer be a constant but vary with the expansion order, namely
\begin{equation}
\spin_{2n} = \frac{(2n+1)^{\f{1}{2n}}}{2}
\end{equation}
at the order $2n$ in the components of $\omega$.

In particular, for the second order, $\spin_2 = \sqrt{\threefourths}$, which coincides with the value of $\spin$ considered in previous works (Refs.~\cite{Florkowski:2018fap,Drogosz:2024gzv,Bhadury:2025boe}), whereas in the limit of infinite order
\begin{equation}
\lim_{n \rightarrow \infty} \spin_n = \f12,
\end{equation}
which explains the correspondence between application ranges discovered in Ref.~\cite{Drogosz:2025ihp}.

If one sets $\spin = \sqrt{\threefourths}$, the classical and quantum spin pictures exactly agree only at the lowest nontrivial order and exponentially diverge from each other at higher orders. The $2n$-th order contribution to the generating function is then
\begin{equation}
\bigg(\f{\spin}{\spin_{2n}}\bigg)^{2n}= \f{3^n}{2n+1}
\end{equation}
times larger in the classical description.

Only even orders in $\omega$ are present in the expansions of the baryon current and the energy-momentum tensor, and only odd orders are present in the expansion of the spin current. The integration over the momentum space in Eqs.~(\ref{clN})–(\ref{clS}) was performed explicitly at the lowest nontrivial order in $\omega$ in \cite{Drogosz:2024gzv} (and then in \cite{Drogosz:2025ose} without the Boltzmann approximation). In principle, it may be performed for any higher order, although the complexity increases rapidly.

\section{Summary}

This work proved that in the context of perfect spin hydrodynamics of spin-$\onehalf$ particles the classical and quantum spin descriptions lead to the same form of the conserved current tensors at each order in the expansion of the components of the spin polarization tensor $\omega_{\mu \nu}$. At the lowest nontrivial order, the expressions are exactly the same, whereas at higher orders they 
separate and differ by a relative multiplicative factor that increases exponentially with the expansion order. This shows that the classical description is valid for sufficiently low values of the components of $\omega_{\mu \nu}$, where the higher-order contributions are negligible.
Conceptually, this correspondence is possible because classically we consider a system of many particles and the  polarization averaging out to a low value.

The correspondence between the expressions explains also the similarity in application range formulas derived in Ref.~\cite{Drogosz:2025ihp}. In that derivation, the behavior of integrands that appear in the generating functions was analyzed in the regime of large momenta and polarizations. Therefore, it was the infinite-order limit that was captured. One could very informally say that the quantum approach results are obtained by reducing the classical spin normalization $\spin = \sqrt{3}/2$, valid for low polarizations, gradually to the standard spin value of $1/2$ at the limit of large polarizations.

From a practical point of view, especially valuable would be an implementation of any of these approaches to spin hydrodynamics
in 3D numerical simulations, similar to those performed recently using different dynamic equations 
\cite{Singh:2024cub, Sapna:2025yss}.

\section*{Acknowledgments} The author thanks Wojciech Florkowski for the inspiring discussions. This research was supported in part by National Science Centre, Poland (NCN) (Grant No. 2022/47/B/ST2/01372).

\appendix

\section{Calculation of an integral of a product of spin four-vectors}

The rank-$n$ tensor $I^{\alpha_1 \alpha_2 \dots \alpha_{n}}$ defined as
\begin{equation}\label{eq:I}
I^{\alpha_1 \alpha_2 \dots \alpha_{n}} \equiv \int \dd S s^{\alpha_1} s^{\alpha_2} \cdots s^{\alpha_{n}}
\end{equation}
is fully symmetric, and it is orthogonal to $p$ in every index, as can be verified through a direct calculation
\begin{equation}\label{contrp}
I^{\alpha_1 \alpha_2 \dots \alpha_{n}} p_{\alpha_1} = \int \dd S s^{\alpha_1} s^{\alpha_2} \cdots s^{\alpha_{n}} p_{\alpha_1} = \f{m}{\pi \spin} \int \dd^4 \!s \delta (s \cdot s + \spin^2) \delta (p \cdot s) (p \cdot s) s^{\alpha_2} \cdots s^{\alpha_{n}} = 0,
\end{equation}
where the factor $\delta (p \cdot s) (p \cdot s)$ causes the integral to vanish. Orthogonality to $p$ in every index entails invariance under contraction of any indices with the projector operator $\Delta_{\mu \nu} \equiv g_{\mu \nu} - \f{p_\mu p_\nu}{m^2}$ (the particle mass $m$ in the definition ensuring that the projector is dimensionless), 
\begin{equation}\label{eq:inv}
\Delta^{\beta}{}_{\alpha_i} I^{\alpha_1 \alpha_2 \dots \alpha_{i-1}\alpha_i\alpha_{i+1}\dots\alpha_{n}} = I^{\alpha_1 \alpha_2 \dots \alpha_{i-1}\beta\alpha_{i+1}\dots\alpha_{n}}.
\end{equation}

\newtheorem{lemma}{Lemma}
\begin{lemma}
$I^{\alpha_1 \alpha_2 \dots \alpha_{n}}$ is a polynomial in $\Delta^{\mu \nu}$.
\end{lemma}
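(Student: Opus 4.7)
The plan is to exploit three facts already established at this point of the Appendix: the full symmetry of $I^{\alpha_1\alpha_2\dots\alpha_n}$, its orthogonality to $p^\mu$ in every index (\ref{contrp}), and the resulting invariance (\ref{eq:inv}) under contraction with the projector $\Delta^{\mu\nu}$. Since the measure $\dd S$ depends on the single external vector $p^\mu$ (besides the Lorentz scalars $m$ and $\spin$), $I^{\alpha_1\dots\alpha_n}$ is a Lorentz-covariant tensor that, by the standard isotropic-tensor decomposition for $\mathrm{SO}(1,3)$, must be a linear combination of tensor products built from $p^\mu$, $g^{\mu\nu}$, and $\epsilon^{\mu\nu\rho\sigma}$.

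The first step is to eliminate $\epsilon^{\mu\nu\rho\sigma}$: any term containing it would carry a pair of antisymmetric indices that gets symmetrized against each other under the total symmetry of $I$, and therefore vanishes. This reduces the decomposition to
\begin{equation*}
I^{\alpha_1\dots\alpha_n} \;=\; \sum_{P} c_P\, p^{\alpha_{i_1}}\!\cdots p^{\alpha_{i_r}}\, g^{\alpha_{j_1}\alpha_{j_2}}\!\cdots g^{\alpha_{j_{n-r-1}}\alpha_{j_{n-r}}},
\end{equation*}
summed over partitions $P$ of the index set into $r$ singletons (dressed with $p$'s) and $(n-r)/2$ pairs (dressed with $g$'s). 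The second step is to remove the remaining $p$'s using (\ref{eq:inv}): contracting every free index with $\Delta^{\beta_i}{}_{\alpha_i}$ leaves $I$ unchanged but, on the right-hand side, annihilates every term containing a bare $p^{\alpha_i}$, since $\Delta^{\beta}{}_{\alpha}p^{\alpha}=0$, while simultaneously promoting each surviving $g^{\alpha_i\alpha_j}$ to $\Delta^{\beta_i\beta_j}$. What remains is a sum of products of $\Delta$'s alone, which proves the lemma. As a byproduct, (\ref{eq:odd}) becomes immediate: for odd $n$ no such pairing is possible, forcing $I$ to vanish.

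The only delicate point is the invocation of the isotropic-tensor decomposition for $\mathrm{SO}(1,3)$ in the first step. If a more elementary justification is preferred, one can instead pass to the rest frame of $p^\mu=(m,\vec 0)$, in which $\dd S$ reduces to the uniform measure on the sphere $|\vec s|=\spin$; standard $\mathrm{SO}(3)$-invariant integration then gives $I$ as a polynomial in $\delta^{ij}$, which coincides with $-\Delta^{ij}$ in that frame (where $\Delta^{0\mu}=0$). Lorentz covariance then promotes the equality to every frame, and the lemma follows. With this structural result in hand, determining the overall coefficient in (\ref{eq:even}) reduces to a single trace computation, for instance contracting the result with $g_{\alpha_{2n-1}\alpha_{2n}}$ and using $\Delta^\mu{}_\mu=3$ together with $I^{\alpha_1\dots\alpha_{2n-2}\alpha\beta}g_{\alpha\beta}=-\spin^2 I^{\alpha_1\dots\alpha_{2n-2}}$ from $s\cdot s=-\spin^2$.
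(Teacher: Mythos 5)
Your proof is correct and rests on the same two pillars as the paper's: the covariant decomposition of $I^{\alpha_1 \dots \alpha_n}$ into products of $g^{\mu\nu}$ and $p^{\mu}$, and the projector invariance (\ref{eq:inv}). The difference lies in how the decisive step is executed. The paper argues by necessity, examining which combinations of monomials with free $p$'s and free $g$'s can survive the invariance requirement and showing by coefficient matching that they must assemble into $\Delta$'s; you instead apply $\Delta^{\beta_1}{}_{\alpha_1}\cdots\Delta^{\beta_n}{}_{\alpha_n}$ to the general decomposition and read off the result, since the left-hand side is unchanged by (\ref{eq:inv}) while on the right every term with a bare $p$ is annihilated and every $g$ is promoted to a $\Delta$. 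This is tighter and avoids the paper's somewhat informal compensation argument. You also patch a small omission: the paper never mentions the Levi-Civita tensor as a possible building block, whereas you dispose of it via the total symmetry of $I$ (with only one external vector $p$ available, any $\epsilon$ retains at least three free antisymmetric indices, which symmetrization kills). Your rest-frame alternative, reducing $\dd S$ to the uniform measure on the sphere $|\vec{s}|=\spin$ and invoking $\mathrm{SO}(3)$-isotropic integration before boosting back, is a genuinely more elementary route that bypasses the $\mathrm{SO}(1,3)$ decomposition altogether and delivers (\ref{eq:odd}) by parity as a bonus; it is arguably the cleanest justification of the lemma. Your closing trace computation for the normalization, contracting with $g_{\alpha_{2n-1}\alpha_{2n}}$ rather than the paper's $\Delta_{\alpha_{2n-1}\alpha_{2n}}$, is equivalent by the orthogonality (\ref{contrp}).
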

\begin{proof}
The only available objects from which $I^{\alpha_1 \alpha_2 \dots \alpha_{n}}$ has to be constructed are the metric tensor $g^{\mu \nu}$ and the four-momentum $p^\mu$, and the form of the right-hand side of (\ref{eq:I}) implies that $I^{\alpha_1 \alpha_2 \dots \alpha_{n}}$ is a polynomial in those quantities. 

Consider any monomial in $\Delta$, $g$ and $p$ that is a part of $I^{\alpha_1 \alpha_2 \dots \alpha_{n}}$.
A monomial built solely of projectors $\Delta$ is invariant under the action of any projector $\Delta$.
A monomial that contains a free $p^\mu$ (i.e., one that is not included in any projector) is annihilated by $\Delta^\nu{}_\mu$. A free $g$ within a monomial turns into $\Delta$ when acted upon by a suitable projector.

Therefore, for the invariance requirement (\ref{eq:inv}) to be satisfied, either all monomials must be built only of $\Delta$ or the annihilation of any monomial $Q_1$ with free $p$ by a projector must be compensated by an appearance of terms with $p$ when a matching monomial $Q_2$ with free $g$ is acted upon by the same projector. However, for the coefficients to agree in the latter case, $Q_1$ and $Q_2$ must contain a common factor, and their sum must apart from this factor contain a linear combination that can be written as a $\Delta$,
\begin{equation}
Q_1 + Q_2 = Q(\Delta, g,p)\bigg(g^{\mu \nu} - \f{p^{\mu} p^{\nu}}{m^2} \bigg) = Q\Delta^{\mu \nu} \ \ {\rm for \ some} \ \mu, \nu.
\end{equation}
Repeating this coefficient-matching argument for every index shows that if (\ref{eq:inv}) holds, then the sum of monomials with free $p$ and monomials with free $g$ can in fact been written as monomials in $\Delta$ only, and, therefore, the entire $I$ is a polynomial in $\Delta$.
\end{proof}

From Lemma 1 it follows that all odd-rank tensors $I$ vanish,
\begin{equation}\label{eq:odd}
I^{\alpha_1 \alpha_2 \dots \alpha_{2n+1}} = \int \dd S s^{\alpha_1} s^{\alpha_2} \cdots s^{\alpha_{2n+1}} = 0,
\end{equation}
as it is impossible to construct an odd-rank tensor only from rank-two tensors $\Delta^{\mu \nu}$.

\begin{lemma}
Up to normalization, the even-rank tensors $I$ are symmetrized sums of products of $\Delta$,
\begin{equation}\label{idelta}
I^{\alpha_1 \alpha_2 \dots \alpha_{2n}} = \int \dd S s^{\alpha_1} s^{\alpha_2} \cdots s^{\alpha_{2n}} =  C_n \Delta^{(\alpha_1 \alpha_2} \Delta^{\alpha_3 \alpha_4} \cdots \Delta^{\alpha_{2n-1} \alpha_{2n})}.
\end{equation}
\end{lemma}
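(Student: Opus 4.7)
The plan is to combine Lemma 1 — which has already established that $I^{\alpha_1\ldots\alpha_{2n}}$ is a polynomial in the projector $\Delta^{\mu\nu}$ — with the full symmetry of $I$ noted at the beginning of the appendix, inherited from the invariance of the integrand $s^{\alpha_1}\cdots s^{\alpha_{2n}}$ under reorderings of the factors.

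First I would observe that any rank-$2n$ polynomial built purely from $\Delta^{\mu\nu}$ is a linear combination of monomials consisting of exactly $n$ projectors, each such monomial corresponding to a pairing of the $2n$ external indices into $n$ unordered blocks of two. Denoting the $(2n-1)!!$ distinct pairings by $P$, and writing $\Delta^{(P)}$ for the associated product of projectors (for instance, the identity pairing gives $\Delta^{\alpha_1\alpha_2}\Delta^{\alpha_3\alpha_4}\cdots\Delta^{\alpha_{2n-1}\alpha_{2n}}$), Lemma 1 yields
\begin{equation}
I^{\alpha_1\ldots\alpha_{2n}} = \sum_P c_P \, \Delta^{(P)}
\end{equation}
for some scalar coefficients $c_P$.

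The key step is then to apply the full symmetrization operator $\hat{S} = \frac{1}{(2n)!}\sum_\tau$, where $\tau$ ranges over all permutations of the external indices $\alpha_1,\ldots,\alpha_{2n}$. Since $I$ is fully symmetric one has $\hat{S}\,I = I$, while any two pairings $P$ and $P'$ differ only by a relabelling of indices, so $\Delta^{(P')}$ is obtained from $\Delta^{(P)}$ by the action of some permutation $\rho$. Because $\hat{S}\circ\rho = \hat{S}$ (the average over all permutations is itself permutation-invariant), the tensor $\hat{S}\,\Delta^{(P)}$ is independent of $P$ and equals the fully symmetrized product $\Delta^{(\alpha_1\alpha_2}\Delta^{\alpha_3\alpha_4}\cdots\Delta^{\alpha_{2n-1}\alpha_{2n})}$. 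The entire sum therefore collapses to $I^{\alpha_1\ldots\alpha_{2n}} = C_n\,\Delta^{(\alpha_1\alpha_2}\cdots\Delta^{\alpha_{2n-1}\alpha_{2n})}$ with $C_n = \sum_P c_P$, as claimed.

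The only real obstacle is making the invariance-under-relabelling statement for $\hat{S}\,\Delta^{(P)}$ airtight, which is pure combinatorics of the symmetric group acting on rank-$2n$ symmetric tensors. The lemma as stated does not pin down the value of $C_n$ — that constant is determined separately, for instance by contracting both sides of (\ref{idelta}) with $g_{\alpha_1\alpha_2}\cdots g_{\alpha_{2n-1}\alpha_{2n}}$ and using $s\cdot s = -\spin^2$ together with $\Delta^\mu{}_\mu = 3$, possibly by induction on $n$, which reproduces the coefficient $2\spin^{2n}/(2n+1)$ quoted in (\ref{ints}).
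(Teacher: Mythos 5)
Your proof is correct, and it starts from the same decomposition as the paper (by Lemma~1, $I$ is a linear combination of the $(2n-1)!!$ pairing products $\Delta^{(P)}$), but the concluding step is genuinely different. The paper permutes the indices of $I$, observes that this permutes the coefficients $c_P$ among themselves, and then invokes the \emph{linear independence} of the pairing products to conclude that all $c_P$ are equal; the symmetrized form then follows because the normalized symmetrization of any single pairing product equals $\frac{1}{(2n-1)!!}\sum_P \Delta^{(P)}$. You instead apply the symmetrizer $\hat S$ to both sides: $\hat S I = I$ by full symmetry, and $\hat S \Delta^{(P)}$ is the same tensor for every $P$ because any two pairings differ by an index relabelling and $\hat S\circ\rho=\hat S$. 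This collapses the sum to $C_n\,\Delta^{(\alpha_1\alpha_2}\cdots\Delta^{\alpha_{2n-1}\alpha_{2n})}$ with $C_n=\sum_P c_P$ \emph{without} ever needing the linear independence of the $\Delta^{(P)}$ — a nontrivial fact the paper asserts but does not prove — so your route is slightly more economical. The "only real obstacle" you flag (that $\hat S\,\Delta^{(P)}$ is independent of $P$) is indeed just the group-theoretic identity $\hat S\circ\rho=\hat S$ and is airtight as you state it. Your remark that the lemma leaves $C_n$ undetermined and that it must be fixed by a separate contraction argument matches the paper, which computes $C_n=2\spin^{2n}/(2n+1)$ afterwards by recursively contracting with $\Delta_{\alpha_{2n-1}\alpha_{2n}}$ (rather than with $g_{\alpha_1\alpha_2}\cdots$ all at once, but the idea is the same).
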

\begin{proof}
Since $I$ is a rank-$2n$ tensor constructed of $\Delta$, its most general form is a linear combination of products of $\Delta$, 
\begin{equation}\label{symm1}
I^{\al_1 \al_2 \dots \al_{2n}} = \sum_i c_i \Delta^{\alpha_{\sigma_i(1)} \alpha_{\sigma_i(2)}}\Delta^{\alpha_{\sigma_i(3)} \alpha_{\si_i(4)}} \cdots \Delta^{\alpha_{\si_i(2n-1)} \alpha_{\si_i(2n)}},
\end{equation}
where $i$ enumerates inequivalent $2n$-element permutations $\sigma_i$ (which are in a one-to-one correspondence with the $(2n-1)!!$ partitions of a $2n$-element set into $n$ two-element sets). Now, since $I$ is fully symmetric, it is invariant under any permutation $\rho$ of its indices,
\begin{align}\begin{split}\label{symm2}
I^{\al_1 \al_2 \dots \al_{2n}} &=I^{\al_{\rho(1)} \al_{\rho(2)} \dots \al_{\rho(2n)}} = \sum_i c_i \Delta^{\alpha_{\sigma_i \circ \rho(1)} \alpha_{\sigma_i \circ \rho(2)}}\Delta^{\alpha_{\sigma_i \circ \rho(3)} \alpha_{\sigma_i \circ \rho(4)}} \cdots \Delta^{\alpha_{\sigma_i \circ \rho(2n-1)} \alpha_{\sigma_i \circ \rho(2n)}} \\
&= \sum_i c_{\rho^{-1} (i)} \Delta^{\alpha_{\sigma_i(1)} \alpha_{\sigma_i(2)}}\Delta^{\alpha_{\sigma_i(3)} \alpha_{\si_i(4)}} \cdots \Delta^{\alpha_{\si_i(2n-1)} \alpha_{\si_i(2n)}}.
\end{split}\end{align}
Comparing (\ref{symm1}) and (\ref{symm2}) and using the linear independence of the products of projectors leads to the equality of linear combination coefficients
\begin{equation}
c_{\rho(i)} = c_i
\end{equation}
for any permutation $\rho$. Therefore, all the $c_i$ are equal, which proves (\ref{idelta}).
\end{proof}

The normalization coefficient $C_n$ can be calculated recursively. Let us calculate a double contraction of $\Delta$ and $I$,

\begin{align}\begin{split}
J^{\al_1 \al_2 \dots \al_{2n-2}} &\equiv\Delta_{\alpha_{2n-1} \alpha_{2n}} I^{\alpha_1 \alpha_2 \dots \al_{2n-1} \alpha_{2n}} = C_n \Delta_{\alpha_{2n-1} \alpha_{2n}} \Delta^{(\alpha_1 \alpha_2} \Delta^{\alpha_3 \alpha_4} \cdots \Delta^{\alpha_{2n-1} \alpha_{2n})} \\
&= C_n \f{1}{(2n-1)!!}\Delta_{\alpha_{2n-1} \alpha_{2n}} \sum_i \Delta^{\alpha_{\sigma_i(1)} \alpha_{\sigma_i(2)}}\Delta^{\alpha_{\sigma_i(3)} \alpha_{\si_i(4)}} \cdots \Delta^{\alpha_{\si_i(2n-1)} \alpha_{\si_i(2n)}},
\end{split}\end{align}
where $i$ enumerates inequivalent permutations (compare the note below Eq.~(\ref{symm1})). For each of the terms of the sum, there are two possibilities. Either the indices $\alpha_{2n-1}$ and $\al_{2n}$ belong to the same projector operator or to two different ones. There are $(2n-3)!!$ terms that correspond to the former possibility (the number of partitions of a $2n-2$-element set into $n-1$ two-element sets), while the remaining $(2n-2)((2n-3)!!)$ terms correspond to the latter one. Thus,
\begin{align}\begin{split}
J^{\al_1 \al_2 \dots \al_{2n-2}}  &= C_n \f{1}{(2n-1)!!} \bigg(\Delta_{\alpha_{2n-1} \alpha_{2n}} \Delta^{\alpha_{2n-1} \alpha_{2n}}\sum_i \prod_{m=1}^{n-1}\Delta^{\alpha_{\rh_i(2m-1)} \alpha_{\rh_i(2m)}}\\
&+\Delta_{\alpha_{2n-1} \alpha_{2n}} \sum_{j,k}\Delta^{\alpha_{2n-1} \alpha_j}\Delta^{\alpha_{2n-1} \alpha_k} \sum_i \prod_{m=1}^{n-2} \Delta^{\alpha_{\ta_i(f(2m-1))} \alpha_{\ta_i(f(2m))}} \bigg),
\end{split}\end{align}
where $i$ enumerates all the $(2n-3)!!$ inequivalent $2n-2$-element permutations $\rho_i$ in the first sum and all the $(2n-5)!!$ inequivalent $2n-4$-element permutations $\tau_i$ in the second sum, whereas $f$ is an arbitrary bijection $ f: \{1,...,2n-4\} \rightarrow \{1,...,2n-2\} \setminus \{j,k\}$. Performing the projection operator contractions,
\begin{align}
\Delta_{\alpha_{2n-1} \alpha_{2n}} \Delta^{\alpha_{2n-1} \alpha_{2n}} &=3,\\
\Delta_{\alpha_{2n-1} \alpha_{2n}} \Delta^{\alpha_{2n-1} \alpha_j}\Delta^{\alpha_{2n-1} \alpha_k} &= \Delta^{\al_j \al_k},
\end{align}
results in
\begin{align}\begin{split}\label{rec1}
J^{\al_1 \al_2 \dots \al_{2n-2}}   &= C_n \f{1}{(2n-1)!!} \bigg(3 \sum_i \prod_{m=1}^{n-1}\Delta^{\alpha_{\rh_i(2m-1)} \alpha_{\rh_i(2m)}} + \sum_{j,k}\Delta^{\al_j \al_k} \sum_i \prod_{m=1}^{n-2} \Delta^{\alpha_{\ta_i(f(2m-1))} \alpha_{\ta_i(f(2m))}} \bigg) \\
&=C_n \f{3 + 2n-2}{(2n-1)!!} \sum_i \prod_{m=1}^{n-1}\Delta^{\alpha_{\rh_i(2m-1)} \alpha_{\rh_i(2m)}} = C_n \f{2n+1}{2n-1}  \Delta^{(\alpha_1 \alpha_2} \Delta^{\alpha_3 \alpha_4} \cdots \Delta^{\alpha_{2n-3} \alpha_{2n-2})}.
\end{split}\end{align}
On the other hand, from the definition (\ref{eq:I}),
\begin{align}
\Delta_{\alpha_{2n-1} \alpha_{2n}} I^{\alpha_1 \alpha_2 \dots \al_{2n-1} \alpha_{2n}} = \int \dd S s^{\alpha_1} s^{\alpha_2} \cdots s^{\alpha_{2n}} \bigg(g_{\alpha_{2n-1} \alpha_{2n}} - \f{p_{\alpha_{2n-1}}p_{\alpha_{2n}}}{m^2} \bigg).
\end{align}
The term that contains the contractions with $p$ vanishes similarly as in (\ref{contrp}), whereas the contraction in the other term produces $\spin^2$, leaving
\begin{align}\label{rec2} 
J^{\al_1 \al_2 \dots \al_{2n-2}} =\spin^2 \int \dd S s^{\alpha_1} s^{\alpha_2} \cdots s^{\alpha_{2n-2}} = \spin^2 I^{\alpha_1 \alpha_2 \dots \al_{2n-3} \alpha_{2n-2}} = \spin^2 C_{n-1} \Delta^{(\alpha_1 \alpha_2} \Delta^{\alpha_3 \alpha_4} \cdots \Delta^{\alpha_{2n-3} \alpha_{2n-2})}.
\end{align}
Thus, from (\ref{rec1}) and (\ref{rec2}),
\begin{equation}
C_n = \spin^2 C_{n-1}  \f{2n-1}{2n+1}.
\end{equation}
Because $\int \dd S = 2 = C_0$ (normalization due to the constants in the definition of the measure $\dd S$\cite{Florkowski:2018fap}), the recursion is resolved as
\begin{equation}
C_n = \f{2 \spin^{2n}}{2n+1},
\end{equation}
leading to the result
\begin{equation}\label{eq:even}
\int \dd S s^{\alpha_1} s^{\alpha_2} \cdots s^{\alpha_{2n}} = \f{2\spin^{2n}}{2n+1} \Delta^{(\alpha_1 \alpha_2} \Delta^{\alpha_3 \alpha_4} \cdots \Delta^{\alpha_{2n-1} \alpha_{2n})}.
\end{equation}

\section{Generalization of the results to the Fermi--Dirac statistics}\label{sec:fd}

Following \cite{Kar:2025qvj} and \cite{Bhadury:2025wuh},
the generating functions for the case of spin-1/2 particles obeying the Fermi--Dirac distribution in the classical and the quantum approach are
\begin{equation}
n_{\rm clFD} = \int \dd P \int \dd S \big[F(y^+) + F(y^-)\big]
\end{equation}
and
\begin{equation}
n_{\rm qtFD} = \int \dd P \big[F(y^{++})+F(y^{+-}) + F(y^{-+})+F(y^{--}) \big],
\end{equation}
respectively, with the function $F$ defined as
\begin{equation}\label{deff}
F(y) = \f{y^2}{2} + {\rm Li}_2 (-e^y) +\f{\pi^2}{6},
\end{equation}
where ${\rm Li}_2$ is the dilogarithm. The symbol $y$ with the plus or minus indices is a shorthand notation for the exponent appearing in the distribution function, for particles or antiparticles (the first sign) with spin up or down (the second sign; the latter distinction appears only in the quantum framework),
\begin{align}
y^{\pm} &= \mp \xi + p \cdot \beta - \frac12 \om : s,\\
y^{\pm \pm} &= \mp \xi + p \cdot \beta \ \mp \sqrt{-a^2}.
\end{align}
It can be verified that the second derivative of the function $F$ has a familiar form,
\begin{equation}\label{secdev}
F''(y) = \frac{1}{1+\exp y} = \begin{cases}\sum_{k=1}^\infty(-1)^{k+1} \exp(-ky),\quad y>0,\\ \sum_{k=0}^\infty(-1)^k \exp(ky),\quad y < 0.\end{cases}
\end{equation}
For nondegenerate matter ($y > 0$), it is the first expansion that should be used.
Now, integration of the series term by term\footnote{One can verify by computing $F(y)$ at selected points and comparing it with the sum of the series that the constants of integration are $0$.} and application of (\ref{oms}) leads to Fermi--Dirac analogues of expressions (\ref{expqt}) and (\ref{expcls})
\begin{align}
n_{\rm cl FD} \! &= \!4 \!\! \int \!\!\dd P \sum_{k=1}^\infty \f{1}{k^2}\exp (-k p \!\cdot\! \beta) \cosh (k \xi) \cosh(k\!\sqrt{-a^2}) = \!\sum_{k=1}^\infty \!\f{4}{k^2} \!\! \int \! \! \dd P  \exp\left(- k p \!\cdot \!\beta \right) \cosh (k \xi) \!\sum_{n=0}^\infty (-1)^n \f{(-a^{2})^n k^{2n} }{(2n)!},\\
\begin{split}n_{\rm qt FD} &= 2 \int \dd P \int \dd S \sum_{k=1}^\infty \f{1}{k^2} \exp (-k p \cdot \beta) \cosh (k \xi)\exp \Big(\frac{k}{2} \om : s \Big)\\ &= \sum_{k=1}^\infty \f{4}{k^2} \int \dd P \exp(-k p \cdot \beta) \cosh (k\xi) \sum_{n=0}^\infty (-1)^n \f{ \spin^{2n} (-a^2)^n 2^{2n} k^{2n}}{(2n+1)!}.
\end{split}\end{align}
The first contribution, $k=1$, is the Boltzmann expression, as in the main text. For any higher $k$, the ratio of the term indexed by $(n,k)$ in the expansion of $n_{\rm cl FD}$ to the corresponding term in the expansion of $n_{\rm qt FD}$ is the same as that of the texms indexed by $(n,1)$. Each contribution is multiplied by the same factor in both the classical and the quantum expression, keeping the ratio between them the same as in the Boltzmann case.
Therefore, the relations between the classical and quantum framework derived within the Boltzmann approximation are valid for the Fermi--Dirac statistics as well.

\end{document}